\theoremstyle{definition}
\newtheorem{example}{Example}
\newtheorem{lemma}{Lemma}
\newtheorem{remark}{Remark}
\DeclareMathOperator{\supp}{supp}
\newcommand{\cC}{\mathcal{C}}
\newcommand{\bH}{\mathbf{H}}
\newcommand{\be}{\mathbf{e}}
\newcommand{\bz}{\mathbf{z}}
\newcommand{\bhe}{\hat{\mathbf{e}}}
\newcommand{\he}{\hat{e}}
\newcommand{\br}{\mathbf{r}}
\newcommand{\bh}{\mathbf{h}}
\newcommand{\bs}{\mathbf{s}}
\newcommand{\bc}{\mathbf{c}}
\newcommand{\bv}{\mathbf{v}}
\newcommand{\cH}{\mathcal{H}}
\newcommand{\bzero}{\mathbf{0}}
\newcommand{\bone}{\mathbf{1}}
\newcommand{\ft}{\mathbb{F}_2}
\def\BibTeX{{\rm B\kern-.05em{\sc i\kern-.025em b}\kern-.08em
    T\kern-.1667em\lower.7ex\hbox{E}\kern-.125emX}}
\title{%\LARGE 
%\bf
Constrained Error Pattern Generation for GRAND
%Guessing Random Noise by Constraining Search Space
%ORBGRAND with Constrained Search Space
%Guessing Random Noise under Constraints
%Near Maximum Likelihood Decoding under ORBGRAND with Constrained Search Space
%Search Space Reduction for Near Maximum Likelihood Decoding under ORBGRAND
%Partitioned Search Space for Near Maximum Likelihood Decoding under ORBGRAND
}
\author{
\IEEEauthorblockN{Mohammad Rowshan, {\em Member, IEEE} and Jinhong Yuan, {\em Fellow, IEEE}}
 \IEEEauthorblockA{School of Electrical Eng. and Telecom., University of New South Wales (UNSW), Sydney, Australia\\
 \{m.rowshan,j.yuan\}@unsw.edu.au}
 %\thanks{Mohammad Rowshan and Emanuele Viterbo are with the  department of electrical and computer systems engineering (ECSE), Monash University, Melbourne, VIC 3800, Australia. (e-mail: \{mohammad.rowshan, emanuele.viterbo\}@monash.edu).}
 %\thanks{Andreas Burg is with the Telecommunications Circuits Laboratory (TCL), Swiss Federal Institute of Technology (EPFL), Lausanne 1015, Switzerland (e-mail: andreas.burg@epfl.ch).}
 \thanks{The work was supported in part by the Australian Research Council (ARC) Discovery Project under Grant DP220103596.}
}
\begin{document}

\maketitle
\pagestyle{empty}
\thispagestyle{fancy}
%\fancyhf{}
\lhead{To appear in the proceedings of the 2022 IEEE International Symposium on Information Theory (ISIT 2022)}
\cfoot{}

%%%%%%%%%%%%%%%%%%%%%%%%%%%%%%%%%%%%%%%%%%%%%%%%%%%%%%%%%%%%%%%%%%%%%%%%%%%%%%%%
\begin{abstract}
%Maximum-likelihood (ML) decoding can be used to obtain the optimal performance of error correction codes. However, the size of the search space and consequently the decoding complexity grows exponentially with the code-length, making it impractical to be employed for long codes. In this paper, we propose an approach to constrain the search space for the error sequence generator under a recently introduced near ML decoding scheme called ordered reliability bit guessing random additive noise decoding (ORBGRAND).  In this approach, the syndrome-based constraints which divide the search space into disjoint sets are progressively evaluated. By employing $p$ constraints extracted from the parity check matrix, the average number of queries reduces by a factor of $2^p$  without degradation in the error correction performance. %An efficient way of embedding the constraints into the error sequence generator is also suggested.
Maximum-likelihood (ML) decoding can be used to obtain the optimal performance of error correction codes. However, the size of the search space and consequently the decoding complexity grows exponentially, making it impractical to be employed for long codes. In this paper, we propose an approach to constrain the search space for error patterns under a recently introduced near ML decoding scheme called guessing random additive noise decoding (GRAND). In this approach, the syndrome-based constraints which divide the search space into disjoint sets are progressively evaluated. By employing $p$ constraints extracted from the parity check matrix, the average number of queries reduces by a factor of $2^p$ while the error correction performance remains intact.

\end{abstract}

\begin{IEEEkeywords}
Maximum likelihood decoding, error pattern, search space, guessing random additive noise decoding, GRAND.%, PAC codes, extended BCH codes.
\end{IEEEkeywords}
%%%%%%%%%%%%%%%%%%%%%%%%%%%%%%%%%%%%%%%%%%%%%%%%%%%%%%%%%%%%%%%%%%%%%%%%%%%%%%%%

%% The paper must be self-contained. However, if you are referring to
%% a full version for checking certain proofs, please provide the
%% publically accessible location below.  If the paper is completely
%% self-contained, you can remove the following line from your submission.
%\textit{A full version of this paper is accessible at: \url{http://www.isit2018.org/}}

\section{INTRODUCTION}
\label{sec:intro}
Soft-decision based decoding can be classified into two major categories of decoding algorithms \cite{lin}: Code structure-based algorithms and reliability-based algorithms or general decoding algorithms as they usually do not depend on the code structure. In the reliability-based algorithms, which is the focus of this paper, the goal is to find the closest modulated codeword to the received sequence using a metric such as likelihood function. That is, we try to maximize the likelihood in the search towards finding the transmitted sequence. Hence, this category of decoding algorithms is called maximum likelihood (ML) decoding which is known as an optimal decoding approach. Maximum likelihood decoding has been an attractive subject for decades among the researchers. Error sequence generation is one of the central problems in any ML decoding scheme. %The order that the error sequences are generated follows the likelihood metric in descending order. 
In general, a ML decoding is prohibitively complex for most codes as it was shown to be an NP-complete problem \cite{berlekamp}. Hence, the main effort of the researchers has been  concentrated on reducing the algorithm's complexity for short block-lengths. The complexity reduction has inevitably come at the cost of error correction performance degradation. %Due to this degradation, % This degradation makes the decoding process was no longer ML decoding but so called near ML decoding. 

Information set decoding \cite{dorsch} and its variant, ordered statistics decoding (OSD) \cite{fossorier}, %and linear programming \cite{feldman} are 
are a well-known examples of near-ML decoding algorithms. The OSD algorithm permutes the columns of the generator matrix with respect to the reliability of the symbols for every received vector and performs elementary row operations on the independent columns extracted from the permuted generator matrix resulting in the systematic form.  %Then, the hard decision codeword plus an error sequence constructed based on the reliability of $\mathbf{r}$ is obtained and modulated to get $\hat{\mathbf{m}}$. The testing error patterns can have a Hamming weight of up to $l, 0\leq l \leq K$ in $l$-order OSD. 
The recently introduced \emph{guessing random additive noise decoding} (GRAND) \cite{duffy-tit} suggests generating the error sequences in descending order of likelihood that does not require any matrix manipulation for every received vector. %, however its simplicity does not leave much room for improvement. 
The likelihood ordering of error sequences is determined accurately in soft-GRAND \cite{duffy-sgrand} and approximately  in \emph{ordered reliability bits} GRAND (ORBGRAND) \cite{duffy-orbgrand,abbas-orbgrand,riaz}. The accurate ordering requires sorting a large number of metrics while the approximate scheduling of the error sequences is based on distinct integer partitioning of positive integers which is significantly less complex. 
%The interest for near maximum likelihood decoding for short codes is growing recently and gaining momentum. Revisiting short Reed-Muller codes and designing high performance short polar codes and their variants are some examples for the code structure-based decoding. A new simple ML decoding called {\em guessing random additive noise decoding} (GRAND) is an example for reliability-based decoding that was introduced recently in \cite{}. 

%[Cite some historical works on ML decoding, then OSD, and Match and box techniques]
%[Review GRAND variants]
In this work, by utilizing the structure of a binary linear code, %specifically based on the insight provided by the syndrome in the GRAND's queries, 
we propose an efficient pre-processing that constrains the error sequence generation. This approach can save the codebook checking operation which seems to be more computational than the pre-processing for validation. %and discard many of invalid error sequences in the initial stage. 
These syndrome-based constraints are extracted from the parity check matrix (with or without matrix manipulation) of the underlying code. The proposed approach breaks down the unconstrained search problem %of finding the location of errors among $n$ bit positions 
into the problem of dividing the scope of search into disjoint set(s) and determining the number of errors (even or odd) in each set. We show that the size of the search space deterministically reduces by a factor of $2^p$ where $p$ is the number of constraints. The numerical results shows a similar reduction in the average number of queries.  Note that the constrained error sequence generation does not degrade the error correction performance as it is just discarding the error sequences that do not result in valid codewords. The proposed approach can be applied on other GRAND variants such as soft-GRAND \cite{duffy-sgrand} and GRAND-MO \cite{an-grand-mo}.%and any other reliability-based decoding where all the bits are within the scope of guessing the random noise.

\section{PRELIMINARIES}\label{sec:prelim}
%\subsection{Notations}
We denote by $\ft$ the binary finite field with two elements. The cardinality of a set is denoted by $|\cdot|$.  The interval $[a,b]$ represents the set of all integer numbers in $\{x:a\leq x\leq b\}$. The \emph{support} of a vector $\be = (e_1,\ldots,e_{n}) \in \ft^n$ is the set of indices where $\be$ has a nonzero coordinate, i.e. $\supp(\be) \triangleq \{i \in [1,n] \colon e_i \neq 0\}$ . %The indices of the vector elements may start from 0 or 1. 
The \emph{weight} of a vector $\be \in \ft^n$ is $w(\be)\triangleq |\supp(\be)|$. The all-one vector $\bone$ and all-zero vector~ $\bzero$ are defined as vectors with all identical elements of  0 or 1, respectively. The summation in $\ft$ is denoted by $\oplus$.
\subsection{ML Decoding and ORBGRAND}
A binary code $\mathcal{C}$ of length $n$ and dimension $k$ maps a message of $k$ bits into a codeword $\mathbf{c}$ of $n$ bits to be transmitted over a noisy channel. The channel alters the transmitted codeword such that the receiver obtains an $n$-symbol vector~$\mathbf{r}$. A ML decoder supposedly compares $\mathbf{r}$ with all the $2^k$ modulated codewords in the codebook, and selects the one closest to $\mathbf{r}$. In other words, the ML decoder for Gaussian noise channel finds a modulated codeword $\textup{x}(\bc)$ such that
\begin{equation}
    \hat{\mathbf{c}} = \underset{\mathbf{c}\in\mathcal{C}}{\text{arg max }} p(\mathbf{r}|\textup{x}(\mathbf{c}))=\underset{\mathbf{c}\in\mathcal{C}}{\text{arg min }}|\mathbf{r}-\textup{x}(\mathbf{c})|^2.
\end{equation}
This process requires checking possibly a large number of binary error sequences $\bhe$ to select the one that satisfies 
\begin{equation}\label{eq:check}
     \mathbf{H} \cdot (\theta(\mathbf{r})\oplus\hat{\mathbf{e}}) = \mathbf{0}
\end{equation}
where $\bH$ is the parity check matrix of code $\cC$ and $\theta(\mathbf{r})$ returns the hard-decision demodulation of the received vector $\mathbf{r}$. %That is,  
% \begin{equation}
%     \theta(\mathbf{r}) = \underset{\mathbf{w}\in\ft^n}{\text{arg max }} p(\mathbf{r}|\mathbf{w}). 
% \end{equation}
The error sequence $\bhe$ in ORBGRAND is generated in an order suggested by {\em logistic weight} $w_L$ which is defined as \cite{duffy-orbgrand} 
\begin{equation}\label{eq:w_L}
    w_L(\bz)=\sum_{i=1}^n z_i\cdot i
\end{equation}
where $z_i\in\ft$ is the $i$-th element of the permuted error sequence $\hat{\mathbf{e}}$ in the order of the received symbols's reliability which is proportional to their magnitudes $|r_i|$. That is, the error sequence is  $\bhe=\pi_1(\mathbf{z})$ where $\pi_1(.)$ is the permutation function. 
The {\em integer partitions} of positive integers from 1 to $n(n+1)/2$ with distinct parts and no parts being larger than $n$ gives the indices of the positions $i$ in the permuted error sequences $\mathbf{z}$ to be $z_i=1$. %Note that the first two $w_L=1,2$ cannot be partitioned. 
%The permuted error sequences, $\mathbf{z}$, are found by generating all the {\em integer partitions} of $w_L=1,...,n(n+1)/2$ %\footnote{Arithmetic series: the number N of terms being added, multiplying by the sum of the first and last number in the progression and dividing by 2. This occurs when all the numbers from 1 to N are summed =calc=120up.} 
%with distinct parts and no parts being larger than $n$.
Alternatively, a sequential method to generate error patterns was suggested based on partial ordering and a modified logistic weight  in \cite{condo-grand}.

\subsection{Parity Check Matrix}
%As discussed earlier, %in Section \ref{sec:prelim}, 
%the parity check matrix $\bH$ is employed for checking the guessed error sequence $\bhe$. 
The matrix $\bH$ represents the properties of the underlying code. These properties can be exploited to constrain the process of generating $\bz$ in \eqref{eq:w_L} and consequently constraining $\bhe$. %In this section, we investigate the useful properties of $\mathbf{H}$ matrix. Then, we utilize them towards constraining the search space of the error sequences. %First, we review some fundamental concepts of the linear block codes and establish our notations.
We know that the syndrome $\mathbf{s}$ for any vector $\bv$ of length $n$ is obtained by $\mathbf{s} = \mathbf{H}\cdot\bv$ 
% \begin{equation}\label{eq:zero-synd}
%     \mathbf{s} = \mathbf{H}\cdot\hat{\mathbf{c}}
% \end{equation}
and a valid codeword $\bc$ gives $\bH\cdot\bc=\bzero$, where the syndrome is $\bs=\bzero$. Mathematically speaking, all $\bc^\perp\in\cC$ form the null space for $\bH$. Let us represent the parity check matrix as follows:
\begin{equation}
    \bH = [
        \bh_1\,
        \bh_2\,
        \cdots\,
        \bh_{n-k}
    ]^T
\end{equation}
% \begin{equation}
%     \bH = \begin{bmatrix}
%         \bh_1\\
%         \bh_2\\
%         \vdots\\
%         \bh_{n-k}
%     \end{bmatrix}
% \end{equation}
% \begin{equation}
%     \bhe = [\he_0 \; \he_1 \; \he_2 \;\cdots \;\he_{n-1}]
% \end{equation}
% \begin{equation}
%     \bhc = [\hc_0 \; \hc_1 \; \hc_2 \;\cdots \;\hc_{n-1}]
% \end{equation}
% \begin{equation}
%     \bs = [s_0 \; s_1  \; s_2  \;\cdots \;s_{n-k-1}]
% \end{equation}
The $n$-element vectors $\bh_j$ for $j\in[1,n-k]$ are the rows of the parity check matrix denoted by $\bh_j = [h_{j,1} \; h_{j,2}  \;\cdots \;h_{j,n}]$. If we define vector $\bv=\theta(\br)$, then we are looking for some error sequence $\bhe=\pi_1(\bz)$ with elements $\he_i\in\ft,i=1,...,n$ such that $\mathbf{H}\cdot(\bv\oplus\bhe)=\bzero$. When $\he_i=1$, it means the $i$-th received symbol is supposed to be in error. 

\subsection{Manipulation of the Parity Check Matrix}
Now we turn our focus on the manipulation of the parity check matrix so that we can use it to divide the scope of search into disjoint sets of positions. %The set of all possible linear combinations of $\bh_0,\bh_1, ..., \bh_{n-k-1}$ is called the row space of $\bH$. That is, the row space of $\bH$ is the span of the vectors $\bh_0,\bh_1, ..., \bh_{n-k-1}$. 
We know that the rows $\bh_j,j=[1,n-k]$ of the parity check matrix $\bH$ of the code $\cC$ form a basis for the dual code $\cC^\perp$. That is, all $\bc^\perp\in\cC^\perp$ form the row space of $\bH$ which are obtained from the linear combinations of $\bh_1,\bh_2, ..., \bh_{n-k}$ in $\ft$. %In other words, all $\bc^\perp\in\cC^\perp$ are the row span of $\bH$ which are obtained from the linear combinations of $\bh_0,\bh_1, ..., \bh_{n-k}$ in $\ft$. 
From the linear algebra, we know that the row space is not affected by elementary row operations on $\bH$ (resulting in $\bH'$), because the new system of linear equations represented in the matrix form $\bH'\cdot\bc=\bzero$ will have an unchanged solution set $\cC$. %This fact has been used in sparsifying $\bH$ matrices in the literature such as in \ref{}. 
Similarly, permutation of the rows $\bh_j$ does not have any impact on the row space of $\bH$ as well.

\section{Constraining the Error Sequence Generation}

In this section, we first reformulate the syndrome computation, then we propose an approach to utilize the information we obtain from the syndrome $\bs(\bzero)$ for pruning the search space. %The proposed approach is founded based on the following lemma.
% \begin{lemma}\label{lma:e_condition}
% A valid binary error sequence $\bhe$ in the search space that gives $\bs_j(\bhe)=0$ satisfies 
%     \begin{equation}\label{eq:e_condition}
%     |\supp(\bh_j)\cap\supp(\bhe)|\text{ mod }2= s_j(\bzero)\\ 
%     \end{equation}
% \end{lemma}
% \begin{proof}
% It follows from \eqref{eq:rel_synd} by setting $s_j(\bhe)=0$ as the essential requirement for a valid $\bhe$. Observe that $$\bigoplus_{i\in\supp(\bhe)} h_{j,i}=|\supp(\bh_j)\cap\supp(\bhe)|\text{ mod }2$$
% Hence, by setting $s_j(\bhe)=0$ and rearranging the equation \eqref{eq:rel_synd}, we can get \eqref{eq:e_condition}.
% \end{proof}
\subsection{Relative Syndrome}
% For a syndrome $\bs=[s_{1} \; s_{2} \;\cdots \;s_{n-k}]$, we have 
% \begin{equation}\label{eq:synd_elements}
%     s_j = \bigoplus_{i=1}^{n} h_{j,i} \cdot (v_i\oplus\he_i)
% \end{equation}
% for some $j\in[1,n-k]$. %Let us define a set for the indices of error vector $\be$ as follows:
% \begin{equation}\label{eq:synd_elements}
%     \cZ = \supp(\bhe)=\{i:\he_i=1\}
% \end{equation}
Since $\bv=[v_{1} \; v_{2} \;\cdots \;v_{n}]=\theta(\br)$ remains unchanged for all the queries performed on every received vector $\br$ while the error sequence $\bhe$ changes for every query, then the syndrome $\bs$ is a function of $\bhe$. Given vector $\bv\oplus\bhe$ is used for obtaining the syndrome $\bs=[s_{1} \; s_{2} \;\cdots \;s_{n-k}]$, we have 
\begin{equation}\label{eq:synd_elements}
    s_j = \bigoplus_{i=1}^{n} h_{j,i} \cdot (v_i\oplus\he_i)
\end{equation}
for every $j\in[1,n-k]$.  Hence, we can relate every syndrome $\bs$ to the corresponding error sequence $\bhe$ and denote it by $\bs(\bhe)$. The computation of the syndrome of every $\bhe\neq\bzero$ relative to the syndrome of all-zero error sequence $\bhe=\bzero$, i.e., $\bs(\bzero)$, can be simplified as 
\begin{equation}\label{eq:rel_synd}
    s_j(\bhe) = \big(\bigoplus_{i\in\supp(\bhe)} h_{j,i}\big) \oplus s_j(\bzero)
\end{equation}
where the elements of $\bs(\bzero)$ are computed by 
\begin{equation}\label{eq:synd_zero}
    s_j(\bzero) = \bigoplus_{i=1}^{n} \big(h_{j,i} \cdot v_i\big)
\end{equation}
in the first query where $\bv=\theta(\mathbf{r})$ in the presence of all-zero error sequence. Note that \eqref{eq:rel_synd} follows from \eqref{eq:synd_elements} through splitting it into two terms by the distributive law.
\begin{remark}
In order to get $\bs(\bhe)=\bzero$ which is the goal of the search, according to  \eqref{eq:rel_synd}, we need to find a vector $\bhe$ such that
% \begin{equation}\label{eq:rel_synd2}
%     \bigoplus_{i\in\supp(\bhe)} h_{j,i} = s_j(\bzero), \text{ or }
% \end{equation}
    \begin{equation}\label{eq:e_condition}
    \bigoplus_{i\in\supp(\bhe)} h_{j,i}=|\supp(\bh_j)\cap\supp(\bhe)|\text{ mod }2= s_j(\bzero)\\ 
    \end{equation}
    for every $j\in[1,n-k]$.
\end{remark}

\subsection{The Location and Number of Errors}
To utilize \eqref{eq:e_condition} for pruning the search space, we form one or more disjoint sets of bit indices, denoted by $\cH_j,j\in[1,n-k]$ where %That is, $$\bigcup_{j\in[1,p]} \cH_q = [0,n-1]$$ where $\cH_{q_1} \cap \cH_{q_2}=\emptyset$ for $q_1,q_2\in[1,p]$. %
$\cH_j=\supp(\bh_j)$. %for some $j\in[1,n-k]$. 
Note that $\bh_j$ could belong to the original parity check matrix $\bH$ or the manipulated one. We discuss about how to manipulate $\bH$ effectively later.

\begin{remark}\label{rmk:even_odd}
    Observe that a general insight into the number of erroneous positions in set $\cH_j=\supp(\bh_j)$ is given by
    \begin{equation}\label{eq:even_odd}
        |\cH_j\cap\supp(\be)|=
        \begin{dcases}
            \text{odd}  & s_j(\bzero)=1\\ 
            \text{even} & s_j(\bzero)=0
        \end{dcases}
    \end{equation}
    where the even number of errors includes no errors as well. 
    When $|\cH_j|\!=\!n$ or $\bh_j\!=\!\bone$, we have $|\cH_j\cap\supp(\be)|\!=\!|\supp(\be)|$. Then, the whole sequence $\be$ has either odd or even weight.  
\end{remark}

%to determine a relatively small set(s) of indices that we are sure about occurrence of an odd number of errors. This would reduce the search space significantly. We desire to obtain knowledge about the index set where there certainly exist errors. This certain insight can remove many error sequences from our search space. When an $\bs_j(\bzero)=1$, we can ensure there exist errors at positions $i$ where $h_{j,i}=1$. However, $\bs_j(\bzero)=0$ does not provide any certain knowledge about the error sequence because it might mean there are even number of errors at position(s) $i$ where $h_{j,i}=1$ or no errors. Hence, we only consider the case $\bs_j(\bzero)=1$ in this work.

After establishing the required notations and reformulating the background, we come to the main idea. We use the structure of matrix $\bH$ to form sets $\cH_j$. These sets help us to break down the problem of finding the location of errors among $n$ bit positions into the problem of finding the number of errors (even or odd) in intervals. 
We consider several scenarios in the following:

The simplest case is as follows: In many codes such as polar codes, PAC codes, and extended codes by one bit, there exists a row $\bh_j$ such that $h_{j,i}=1$ for every $i\in[1,n]$. The parity corresponding to such a row is called {\em overall parity check} bit. In this case, we have $\cH_j=[1,n]$. Following \eqref{eq:even_odd}, we would know when we have even or odd number of errors in total. Then, we can use this relation to discard the irrelevant error sequences easily in the stage of generation. 

Let us make it a bit more complicated: Pick a row $\bh_j$ of $\bH$ and its corresponding set $\cH_j=\supp(\bh_j)$ such that $\cH_j\subset[1,n]$. In this case, we do not know whether the total number of errors is even or odd but we can obtain this insight on a specific set of bit positions, i.e., $\cH_j$. The number of erroneous bits outside this set could be either even or odd. Implementation of $|\cH_j\cap\,\supp(\be)|$ could be challenging. An easy way to compute this is via a permutation $\pi_2(.)$ that converts the scattered indices in each set into consecutive indices and obtaining an interval $[L_j,U_j]$, similar to \eqref{eq:permutation_2}. Now, by counting every $i\in\supp(\bhe)$ such that $L_j\leq\pi_2(i)\leq~U_j$ or simply $\pi_2(i)>L_j-1$ for a single constraint, we can find %$|\cH_j\cap\,\supp(\bhe)|$. 
\begin{equation}
   |\cH_j\cap\supp(\bhe)| = |\{i\in\supp(\bhe):\pi_2(i)\in[L_j,U_j]\}|.
\end{equation}
Then, the generated $\bhe$ can be output if the condition in \eqref{eq:e_condition} is satisfied. 
Note that a more efficient and fast way to the the aforementioned process can be implemented based on the structure of matrix $\bH$ and a good choice of $\bh_j$. For instance, the elements of $\bh_j$ in polar codes' matrix $\bH$ are already in the consecutive order. 

% \begin{figure}%[ht]
%     \centering
%     \includegraphics[width=0.6\columnwidth]{}
%     \caption{A permutation $\pi(.)$ that puts the bit indices in a consecutive order. The choice of interval is arbitrary. An alternative interval is $[L_j,U_j]=[0,3]$}
%     \label{fig:permuted}
%     \vspace{-10pt}
% \end{figure}

Now, we turn our attention to multiple sets of $\cH_j$. These sets are preferred to be disjoint, then every element in the error sequence will belong to only one interval. This makes the checking simple. An effective approach to get such sets is as follows: Pick a row $\bh_{j_1}$ with the largest $|\supp(\bh_{j_1})|$ and then try to find a row $\bh_{j_2}$ such that $\supp(\bh_{j_2}) \subset \supp(\bh_{j_1})$. In this case, we can replace $\bh_{j_1}$ with $\bh_{j} = \bh_{j_1}\oplus\bh_{j_2}$ in order to have two disjoint sets $\cH_{j}$ and $\cH_{j_2}$ for constraining the search space based on \eqref{eq:e_condition}. As we will see in Section \ref{sec:analysis}, every set $\cH_j$ used in constraining the search space reduces the size of the search space by half. This strategy can make decoding of longer codes practical. Fig. \ref{fig:split_search} illustrates how matrix manipulation can split the search space. %Now, we show that this strategy actually reduces the search space.

\begin{figure}[ht]
    \centering
    \includegraphics[width=0.7\columnwidth]{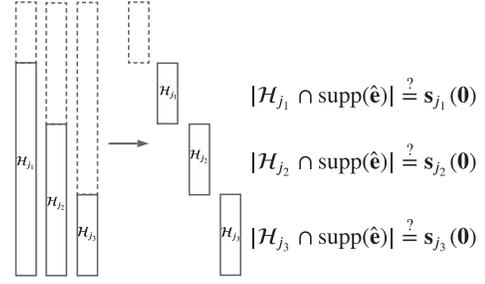}
    \caption{Splitting the permuted indices into three sets that can be represented by three intervals: A sketch showing an example for $\cH_{j_1} = \supp(\bh_{j_1})$, $\cH_{j_2} = \supp(\bh_{j_2})$, $\cH_{j_3} = \supp(\bh_{j_3})$ where $\cH_{j_3} \subset \cH_{j_2} \subset \cH_{j_1}$. The right hand side is obtained from $\bh_{j_1}=\bh_{j_1}\oplus\bh_{j_2}$ and $\bh_{j_2}=\bh_{j_2}\oplus\bh_{j_3}$.}        \label{fig:split_search}
    \vspace{-10pt}
\end{figure}

% Our approach is based on the following axioms:
% \begin{enumerate}
%     \item If $s_j(\bzero)=1$, there are odd number of errors at positions $i$ where $h_{j,i}=1$. That is $$|\{i\in\cZ :h_{j,i}=1\}| \text{ mod } 2 = 1$$
%     \item If $\supp(\bh_{j_2}) \subset \supp(\bh_{j_1})$ and $s_{j_1}(\bzero)=1$, then replacing $\bh_{j_1}$ with $\bh_{j} = \bh_{j_1}\oplus\bh_{j_2}$ may reduce the scope of error location given   $s_{j}(\bzero)=1$, as the positions in set $\supp(\bh_{j_1}) \cap \supp(\bh_{j_2})$ are removed from the party check equations.
%     %\item When $\bh_{j_1}\cdot\bh_{j_2}>0$ and $\bh_{j_1}\cdot\bh_{j_2}=\min(w(\bh_{j_1}),w(\bh_{j_2}))$ and either $\bs_{j_1}(\bzero)=1$ or $\bs_{j_2}(\bzero)=1$ or both, then we can narrow down the scope of error locations by replacing either $\bh_{j_1}$ or $\bh_{j_2}$ with $\bh_{j_1}\oplus\bh_{j_2}$. The desirable choice for row manipulation is the row $j=\underset{k\in\{j_1,j_2\}}{\mathrm{argmax}} w(\bh_k)$.
%     %\item When $\bh_{j_1}\cdot\bh_{j_2}>0$ and $\bh_{j_1}\cdot\bh_{j_2}<\min(w(\bh_{j_1}),w(\bh_{j_2}))$ and either $\bs_{j_1}(\bzero)=1$ or $\bs_{j_2}(\bzero)=1$ or both, several cases:
% \end{enumerate}

% \begin{remark}
% From axiom (1) above, Observe that if there exists a row $\bh_j$ such that $h_{j,i}=1$ for every $i=1,...,n$, then there are an odd number of errors in total, i.e., $|\cZ | \text{ mod } 2 = 1$. The parity corresponding to such row $j$ is called {\em overall parity check} bit. The confirmation about having an odd number of errors can reduce the search space to half.
% \end{remark}

\begin{example}\label{ex:h1_subset_h2}
Suppose we have two rows of a parity check matrix and the associated syndrome bits as follows:
$$\bh_{j_1} = [1 \; 1  \; 1  \; 1 \; 0  \; 1  \;1  \;0], \;\;s_{j_1}(\bzero)=1$$
$$\bh_{j_2} = [0 \; 1  \; 0  \; 1 \; 0  \; 0  \;1  \;0], \;\;s_{j_2}(\bzero)=0$$
One can conclude from $s_{j_1}(\bzero)=1$ that the error(s) could have occurred on any odd number of positions in set $\{1,2,3,4,6,7\}$, i.e., $$\supp(\bhe) \subset \{1,2,3,4,6,7\}: |\supp(\bhe)|\text{ mod } 2=1,$$ or any even number of positions in set $\{2,4,7\}$, however we do not have any information about positions 5 and 8 based on these two constraints. %The information we have are not quite helpful to locate the errors because we have to consider many combinations for the error sequence. However, 
Since $\supp(\bh_{j_2}) \subset \supp(\bh_{j_1})$, then we can replace $\bh_{j_1}$ with
$$\bh_{j}=\bh_{j_1}\oplus\bh_{j_2} = [1 \; 0  \; 1  \; 0 \; 0  \; 1  \;0 \;0]$$
Now, we can search among the error sequences that satisfy the following two conditions, given $s_j(\bzero)=1$: 
$$|\{1,3,6\}\cap\supp(\bhe)| \text{ mod } 2 = 1,$$
$$|\{2,4,7\}\cap\supp(\bhe)| \text{ mod } 2 = 0.$$
To make the intersection operation easier, we can employ the following permutation:
\begin{equation}\label{eq:permutation_2}
    \pi_2:\{5,8,{\color{blue}1,3,6},{\color{red}2,4,7}\}\rightarrow\{1,2,{\color{blue}3,4,5},{\color{red}6,7,8}\}.
\end{equation}
Now, to count all $i\in\supp(\bhe)$ in $\{2,4,7\}\cap\supp(\bhe)$ or in $\{1,3,6\}\cap\supp(\bhe)$, we just need to count every $i\in\supp(\bhe)$ where  $\pi_2(i)>5$ or $\pi_2(i)>2$. Note that $i$ cannot satisfy both, then if $\pi_2(i)>5$ is true, we do not check the other. %The bit at position $i=4$ can take any values as it was not considered in the constraints.
\end{example}

\begin{remark}\label{rmk:pass_seq}
    Observe that an error sequence corresponding to $\bz$ under the single constraint $\bh_j$ is valid if
    % \begin{equation}\label{eq:pass_seq}
    %     \bs_j(\bhe)=\bs_j(\pi_1(\bz))=\bigoplus_{i\in\supp(\bz)} h_{j,\pi_1(i)} = \bs_j(\bzero),
    % \end{equation}
    %or if
    \begin{equation}\label{eq:pass_seq2}
        \bs_j(\bhe)=\bs_j(\pi_1(\bz))=\bigoplus_{\substack{i\in\supp(\bz):\\ L_j\leq \pi_2(\pi_1(i))\leq U_j}} 1 = \bs_j(\bzero).
    \end{equation}
    The summation in \eqref{eq:pass_seq2} can be performed progressively as the sequences $\supp(\bz)$ differ in a limited number of elements between the parent and children sequences during generation.
\end{remark}    

%\section{Considerations for Implementation}\label{sec:implementation}
%In this work, we propose constraints on the error sequences. These constraints should be efficiently embedded into the error sequence generator such that the impact on the throughput would be negligible. 
\subsection{Progressive Evaluation of the Constraints}
For further simplification of embedding the constraints into the error sequence generator, we can keep the partial evaluation of the left hand side of  \eqref{eq:pass_seq2} for every sequence during the integer partitioning process. 
Suppose we have $\bz$ generated during integer partitioning. Then, the partial evaluation of the constraint which excludes the largest element of $\supp(\bz)$ is defined as
\begin{equation}
    \bs_j^*(\bhe)=\bs_j^*(\pi_1(\bz))=\bigoplus_{\substack{i\in\supp(\bz^*):\\ L_j\leq \pi_2(\pi_1(i))\leq U_j}} 1,
\end{equation}
where $\supp(\bz^*)=\supp(\bz)\backslash\{\max(\supp(\bz))\}$. This partial evaluation can be used to progressively compute $\bs_j(\bhe)$ for longer $\bhe$ sequences. This can be appreciated when we recognize two main phases in the generation of the error sequences: 1) partitioning the largest element of the current sequence into two integers, 2) finding the alternative pair of integers for the result of the phase 1. These two phases are repeated until we can no longer do the partitioning in phase 1. Fig. \ref{fig:partitionig} illustrates an example showing these two phases. For every $\bz$ corresponding to an error sequence, we keep a partial evaluation of the constraint such that we do not involve the following elements in \eqref{eq:rel_synd}: 1) The largest  element subject to partitioning in the first phase, or 2) the two largest elements that we are seeking their alternatives in the second phase. This way, the computation of $\bs_j(\bhe)$ becomes simple.

%The elements in the dashed squares in Fig. \ref{fig:partitionig} are elements of $\supp(\bz^*)$.  %This partial $\bs_j^*(\bhe)$ remains unchanged during the second phase. 
To evaluate $\bs_j(\bhe)$ for every sequence $\bhe$, the two new integers during the two phases, denoted by $i_{\ell+1}$ and $i_{\ell+2}$ where $\ell=|\supp(\bz^*)|$, are checked to be in $[L_j,U_j]$ as follows: %If the condition is satisfied they are XORed with $\bs_j^*(\bhe)$ and if the condition in Remark \ref{rmk:pass_seq} is passed, then the error sequence is valid. %\footnote{due to for a large space for the Pseudo-code for the algorithm, the readers can see the actual algorithm on github: WWWW}
\begin{equation}\footnotesize
    \bs_j(\bhe)= 
        \begin{dcases}
            \bs_j^*(\bhe)  & \pi_2(\pi_1(i_{\ell+1}))\text{ and }\pi_2(\pi_1(i_{\ell+2}))\in[L_j,U_j],\\ 
            \bs_j^*(\bhe)\oplus 1  & \pi_2(\pi_1(i_{\ell+1}))\text{ or }\pi_2(\pi_1(i_{\ell+2}))\in[L_j,U_j],\\ 
            \bs_j^*(\bhe) & \text{otherwise.}
        \end{dcases}
\end{equation}
%$\bs_j^*(\bhe)$ of the new sequence:
\begin{figure}[ht]
    \centering
    \includegraphics[width=0.9\columnwidth]{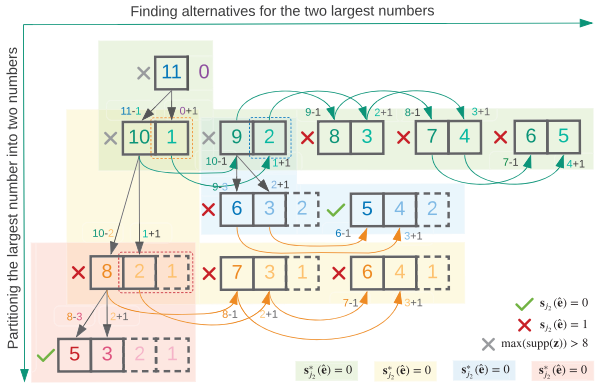}
    \caption{Integer partitioning of $w_L=11$ (Example \ref{ex:partitions_w_L=11}). Red crosses and gray crosses indicate the invalidated sequences by constraint $\bh_{j_2}$ and constraint $i>n$ for $n=8$, respectively. The sequences with identical background color use the same $\bs^*_{j_2}(\bhe)$ for computing $\bs_{j_2}(\bhe)$ during partitioning the largest number (vertical arrows) and finding the alternatives (horizontal curved arrows). }    \label{fig:int_part}
    \label{fig:partitionig}
    \vspace{-10pt}
\end{figure}

\begin{example}\label{ex:partitions_w_L=11}
    Suppose for the code in Example \ref{ex:h1_subset_h2}, the permutation for the order of symbol reliability is $$\pi_1:\{1,2,3,4,5,6,7,8\}\rightarrow\{8,1,5,6,3,7,2,4\}$$ and $w_L=11$. We use $\bh_{j_2}$ with $\bs_{j_2}=0$ as a single constraint on the error sequence generation. As shown in Fig. \ref{fig:partitionig}, $\{11\}$, $\{10,1\}$, and $\{9,2\}$ are not valid as there exists $i>8$ in the sets. For the reset of the sequences,  the condition in Remark \ref{rmk:pass_seq} is not satisfied (indicated with red crosses) except for two sequences checked with green ticks.
\end{example}

\section{Analysis of Reduction in Size of Search Space}\label{sec:analysis}
The size of the search space denoted by $\Omega$ is theoretically $\Omega=\sum_{\ell\in[0,n]} {n \choose \ell}=2^n$ for binary symbols if we do not put any constraints in place. However, we can find a valid codeword with a limited number of queries when we search by checking the sequences in the descending order of their likelihood. 
%As the search space is quite large, we terminate/abandon the search after $b$ queries where we reach the maximum queries set for the algorithm. Now, if we reduce the size of the search space, the decoding may succeed before abandonment. %in the case of high level of noise. 
Nevertheless, the average number of queries (computational complexity) of decoding reduces proportional to the reduction in the search space given the maximum number of queries, denoted by $b$, is large enough.  Therefore, we focus on the analysis of the reduction in the search space in the presence of the constraint(s) as an equivalent analysis on the reduction in the average complexity.  

To obtain the size of the constrained search space, we consider different single constraints as follows:
%\begin{enumerate}
    %\item 

When there exists a row $\bh_j=\bone$ in $\bH$: 
    In this case, we only consider the error sequences $\bhe$ where $|\supp(\bhe)| \text{ mod } 2 = s_j(\bzero)$. 
    That is, $|\supp(\bhe)|$ is either odd (when $s_j(\bzero)=1$) or even (when $s_j(\bzero)=0$), not both. Then, the size of the search space will be 
    \begin{equation}
        \Omega(\bh_j)=\sum_{\ell\in[0,n]:\\\ell\text{ mod } 2 = s_j(\bzero)} {n \choose \ell} = \frac{2^n}{2}=2^{n-1}.
    \end{equation}

    %\item 
Consider a row $\bh_j\neq\bone$ and $\cH=\supp(\bh_j)$: 
    In this case, we only consider the error sequences satisfying $|\cH\cap\supp(\bhe)|\text{ mod }2= s_j(\bzero)$. Then, the size of the constrained search space will be 
    \begin{equation}\label{eq:single_const}
        \Omega(\bh_j) = \sum_{\footnotesize\substack{\ell\in[0,|\cH|]:\\ \ell\text{ mod } 2 = s_j(\bzero)}} {|\cH| \choose \ell} \cdot 2^{n-|\cH|} = \frac{2^{|\cH|}}{2} \cdot 2^{n-|\cH|} = 2^{n-1}.
    \end{equation}
%     where $s$ is 
%     \begin{equation}
%         s =  \begin{dcases}
%  |\cH|-1 & \text{if } |\cH|\text{ mod }2 = 0\\ 
%  |\cH| & \text{if } |\cH|\text{ mod }2 = 1
% \end{dcases}
%     \end{equation}
%\end{enumerate}
As can be seen, the search space halves in both scenarios. However, implementation of the first scenario in the error sequence generator is quite simple with negligible computation overhead.
The following lemma generalizes the size of constrained search space by the number of constraints.
%We observed that when $s_j(\bzero)=1$ for some row $\bh_j$, we can reduce the size of the search space significantly by discarding any error sequence $\bhe$ that no odd number of elements from set $\supp(\bh_j)$ exists in set $\supp(\bhe)$. A larger $|\supp(\bh_j)|$ gives a smaller search space. However, a smaller $|\supp(\bh_j)|$ will be more concentrated and gives fewer combinations for partial error sequence. It turns out that we can take the advantages of large and small $|\supp(\bh_j)|$. If there exist two rows $\bh_{j_1}$ and $\bh_{j_2}$ such that $\supp(\bh_{j_2}) \subset \supp(\bh_{j_1})$, using matrix multiplication discussed in Section ???, we can replace $\bh_{j_1}$ with $\bh_{j_1}=\bh_{j_1}\oplus\bh_{j_2}$. This way, we are actually localizing the process of searching for the error sequence. Obviously, the localization can continue as long as there exists another $\bh_{j_3}$ such that $\supp(\bh_{j_3}) \subset \supp(\bh_{j_2})$. This strategy can make decoding longer codes practical. 

%* A lemma based on the number of sets we know about them.
\begin{lemma}\label{lma:seach_space_size}
Suppose we have a parity check matrix $\bH$ in which there are $p$ rows of $\bh_{j},j=j_1,j_2,...,j_p$ with mutually disjoint index sets $\cH_j = \supp(\bh_{j})$, then the size of the constrained search space by these $p$ parity check equations is
    \begin{equation}\label{eq:search_space_size_less_than}
        \Omega(\bh_{j_1},..,\bh_{j_p})  = 2^{n-p}
    \end{equation}
\end{lemma}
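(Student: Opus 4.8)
The plan is to reduce the joint count to a product of independent per-block counts, exploiting the mutual disjointness of the supports. First I would partition the coordinate index set $[1,n]$ into the $p$ constraint blocks together with the uncovered positions: set $\cH_0 = [1,n]\setminus\bigcup_{t=1}^p \cH_{j_t}$ and write $m_t = |\cH_{j_t}|$ for $t\in[1,p]$ and $m_0 = |\cH_0|$. Because the $\cH_{j_t}$ are mutually disjoint, these blocks tile $[1,n]$, so that $m_0 + \sum_{t=1}^p m_t = n$.

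Next I would observe that the constraint attached to each row only involves the coordinates of $\bhe$ lying inside $\cH_{j_t}$: by \eqref{eq:e_condition}, a sequence $\bhe$ survives all $p$ constraints if and only if $|\cH_{j_t}\cap\supp(\bhe)|\bmod 2 = s_{j_t}(\bzero)$ for every $t$. Since each coordinate index lies in at most one block, the restrictions of $\bhe$ to the different blocks may be chosen independently, and the restriction to $\cH_0$ is entirely free. Hence the count factors as
\[
\Omega(\bh_{j_1},\dots,\bh_{j_p}) = 2^{m_0}\prod_{t=1}^p N_t,
\]
where $N_t$ is the number of binary patterns on the $m_t$ positions of $\cH_{j_t}$ whose weight has the prescribed parity $s_{j_t}(\bzero)$, and the factor $2^{m_0}$ accounts for the unconstrained block $\cH_0$.

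Then I would invoke the elementary parity identity: for any block of size $m\geq 1$,
\[
\sum_{\substack{\ell\in[0,m]:\\ \ell\bmod 2=0}}\binom{m}{\ell}
= \sum_{\substack{\ell\in[0,m]:\\ \ell\bmod 2=1}}\binom{m}{\ell}
= 2^{m-1},
\]
which follows from expanding $(1+1)^m$ and $(1-1)^m$. This is exactly the single-constraint computation already carried out in \eqref{eq:single_const}, applied here to each block separately, so that $N_t = 2^{m_t-1}$ regardless of whether the required parity is even or odd. Substituting gives
\[
\Omega = 2^{m_0}\prod_{t=1}^p 2^{m_t-1}
= 2^{\,m_0 + \sum_{t=1}^p m_t - p}
= 2^{\,n-p},
\]
using the tiling identity in the last step.

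The argument is essentially bookkeeping once disjointness is in hand, so the only real obstacle is justifying the factorization: I must ensure that no coordinate is shared between two blocks, since a shared coordinate would couple the parity constraints and destroy independence, and the count would no longer be a clean product. This is precisely where the hypothesis that the $\cH_{j_t}$ are mutually disjoint enters, and it is also why the manipulation described around Fig.~\ref{fig:split_search} --- replacing $\bh_{j_1}$ by $\bh_{j_1}\oplus\bh_{j_2}$ --- is needed to convert nested supports into disjoint ones before the lemma applies. A minor point worth checking is that each $m_t\geq 1$, so that the parity identity yields exactly $2^{m_t-1}$; this holds because a genuine constraint arises from a nonzero row $\bh_{j_t}$.
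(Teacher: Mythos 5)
Your proof is correct and follows essentially the same route as the paper: the paper's proof likewise generalizes \eqref{eq:single_const} by factoring the count over the mutually disjoint blocks, taking $2^{|\cH_j|-1}$ valid patterns per constrained block times $2^{n-\sum_j|\cH_j|}$ for the unconstrained positions, yielding $2^{n-p}$. Your write-up simply makes explicit the bookkeeping (the tiling of $[1,n]$, the independence justification, and the binomial parity identity) that the paper leaves implicit in its one-line chain of equalities.
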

\begin{proof}
Generalizing \eqref{eq:single_const} for $p$ constraints, we have%can generalize it for $p$ constraints as follows:
    \begin{multline*}
        \Big(\prod_{j=j_1}^{j_p}\sum_{\footnotesize\substack{\ell\in[0,|\cH_j|]:\\\ell\text{ mod } 2 = s_{j}(\bzero)}} {|\cH_{j}| \choose \ell}\Big) \cdot 2^{n-\sum_{j=j_1}^{j_p}|\cH_{j}|} =\\ \big(\prod_{j=j_1}^{j_p} 2^{|\cH_j|-1}\big) \cdot 2^{n-\sum_{j=j_1}^{j_p}|\cH_{j}|} = 2^{n-p}.
    \end{multline*}\vspace{-10pt}
\end{proof}
Note that the error sequences are not evenly distributed in the search space with respect to the constraint(s). Nevertheless, when the maximum number of queries $b$ increases, by the law of large numbers, the reduction in the average queries approaches the reduction in the size of the search space.

\begin{figure}%[ht]
    \centering
    \includegraphics[width=0.7\columnwidth]{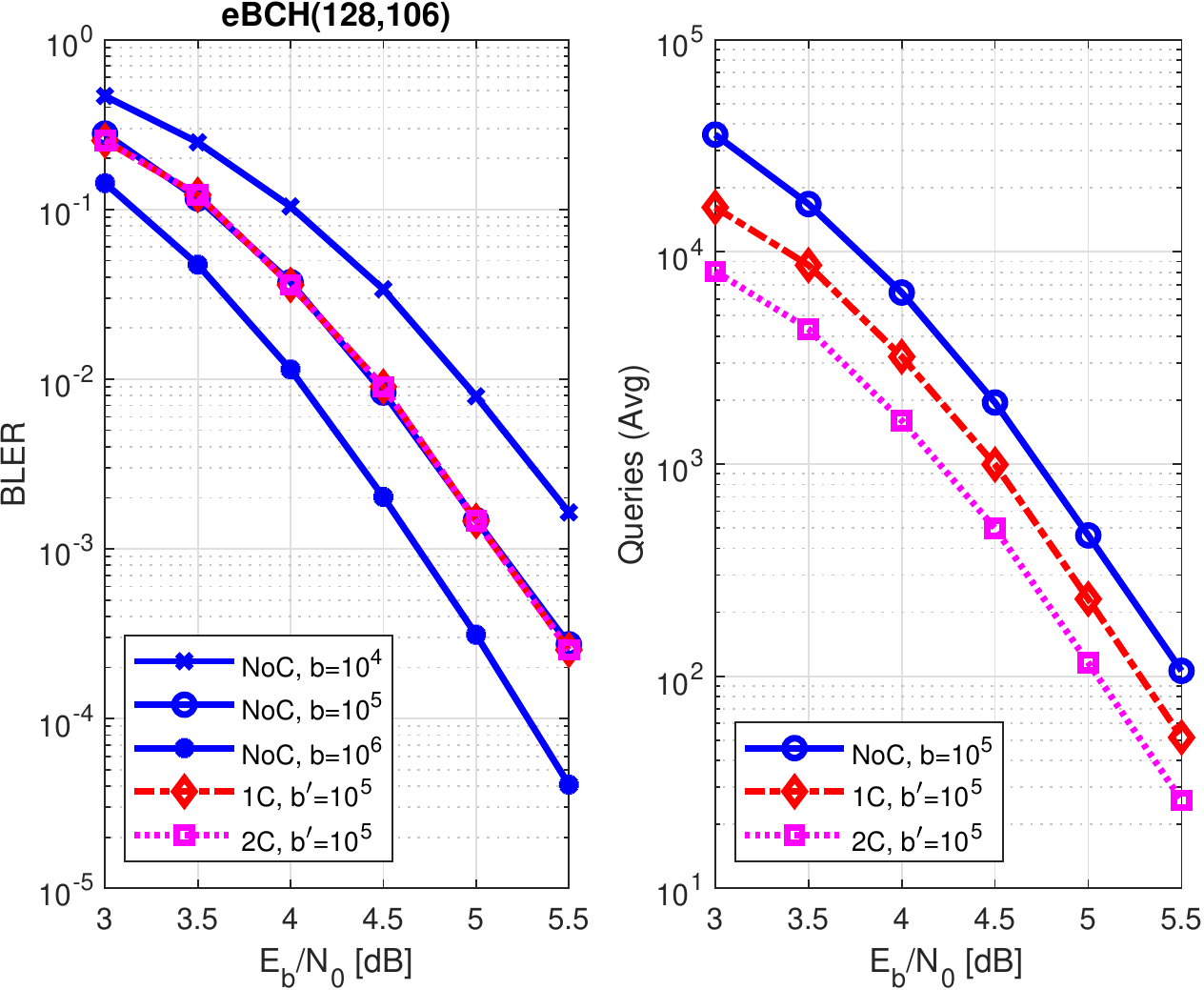}
    \caption{Performance and average queries for eBCH(128,106) under no constraint (NoC), one constraint (1C:$\bh_1)$, and two constraints (2C:$\bh_1,\bh_2)$.}
    \label{fig:bler_ebch}
    \vspace{-10pt}
\end{figure}
\begin{figure}%[ht]
    \centering
    \includegraphics[width=0.7\columnwidth]{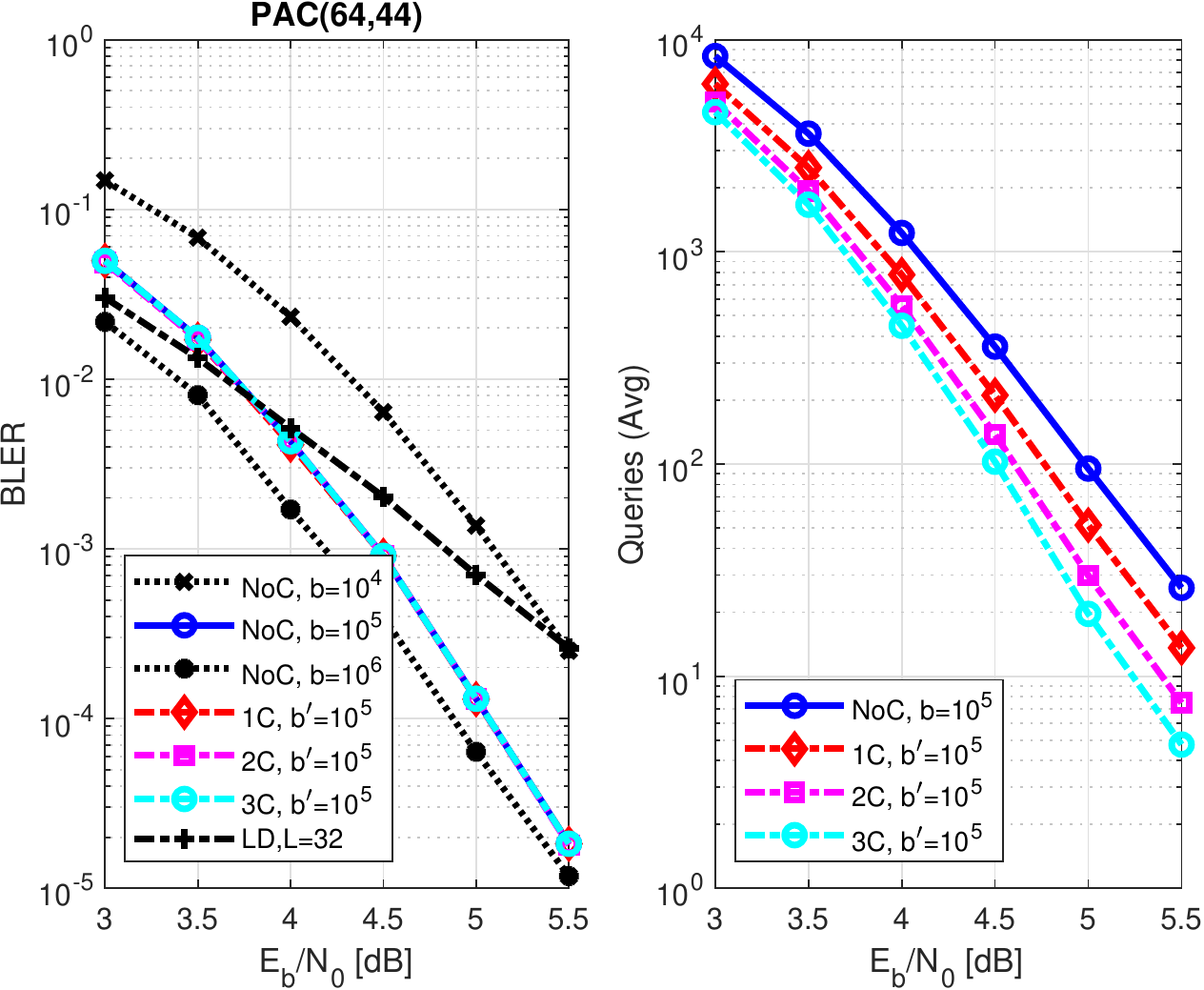}
    \caption{Performance and average queries for PAC(64,44) with precoding polynomial $[1\,0\,1\,1\,0\,1\,1]$ under no constraint (NoC), one constraint (1C:$\bh_1)$, two constraints (2C:$\bh_1,\bh_4)$, and three constraints (3C:$\bh_1,\bh_4,,\bh_5)$. The curve "LD, L=32" shows the BLER under List Decoding \cite{rowshan-pac1} with list size~32.}
    \label{fig:bler_pac}
    \vspace{-10pt}
\end{figure}

\thispagestyle{fancy}
%\fancyhf{}
\lhead{You may find the Python implementation of the algorithms used in this paper in \cite{code}.}
\cfoot{}

\section{NUMERICAL RESULTS} \label{sec:results}
We consider two sample codes for numerical evaluation of the proposed approach. Fig. \ref{fig:bler_ebch} and \ref{fig:bler_pac} show the block error rates (BLER) of extended BCH code (128, 106) and polarization-adjusted convolutional (PAC) \cite{arikan2} code (64, 44), respectively. %For improved  rate-profile and alternative precoding for PAC codes, see \cite{rowshan-err_coef,rowshan-precoding}. 
Note that $b$ is the maximum number of queries on \eqref{eq:check} while $b'$ is the maximum number of considered error sequences (including discarded ones) in the sequence generator. Observe that when no constraints are applied, then $b=b'$.  For the sake of fair comparison, we take an identical maximum number of considered error sequences. That is, $b'$ for the constrained case equals $b$ of no constraints case. Fig. \ref{fig:success_query} illustrates how constraining can reduce the complexity. %Note that if we pick $b>b'$ in Fig. \ref{fig:success_query} as threshold, the performance improves at lower complexity but not at half complexity.
Fig. \ref{fig:bler_ebch} and \ref{fig:bler_pac} illustrate that the average number of queries halves by every introduced constraint for $b=b'=10^5$ in the presence of different number of constraints ($p=1,2,3$, indicated by $p$C) while the BLERs remain unchanged. When comparing the complexity curves, note  that the vertical axis is in logarithm-scale. The black dotted curves  on BLER figures for $b=10^4$ and $10^6$ are only to show the relative performance with respect to $b$. In the table below, we show the average queries of three cases (without constraints, constrained by either $\bh_1$ or $\bh_2$, and by both $\bh_1$ and $\bh_2$) for the maximum queries of $b=10^5$ at different SNRs. Although we do not compare the results with a CRC-polar code with a short CRC or an improved PAC code (see \cite{rowshan-err_coef,rowshan-precoding}), they can perform close %to the case of PAC with maximum $b=10^5$ 
in high SNR regime. 
%with perf_improv for $b=10^7$
% \begin{center}
% \footnotesize
% \begin{tabular}{|c||c|c|c|c|c|c|} 
%  \hline
% $E_b/N_0$ [dB] & 3 & 3.5 & 4 & 4.5 & 5 & 5.5 \\ 
%  \hline\hline
% No Constraint & 459959 & 147770 & 32362 & 6964 & 1102 & 200 \\ 
%  \hline
% $\bh_0$ & 241921 & 75591 & 16273 & 3483 & 552 & 101 \\ 
%  \hline
% $\bh_0,\bh_1$ & 120960 & 37797 & 8138 & 1742 & 276 & 51 \\ 
%  \hline
% \end{tabular}
% \end{center}
\begin{center}
\footnotesize
\begin{tabular}{|c||c|c|c|c|c|c|} 
 \hline
$E_b/N_0$ [dB] & 3 & 3.5 & 4 & 4.5 & 5 & 5.5 \\ 
 \hline\hline
No Constraints  & 35686 & 16838 & 6430 & 1949 & 461 & 106 \\ 
 \hline
Single: $\bh_1 \text{ or } \bh_2$ & 16183 & 8654 & 3205 & 994 & 231 & 51 \\ 
 \hline
Double: $\bh_1 \text{ and } \bh_2$ & 8091 & 4327 & 1602 & 497 & 115 & 26 \\ 
 \hline
\end{tabular}
\end{center}
As can be seen, the average queries is reduced by a factor of two after adding every constraint. At lower maximum queries, $b<10^5$, we observe slightly less reduction. For instance, when $b=10^4$, the average queries at $E_b/N_0=5$ for these three constraint cases are 205, 144, and 102, respectively.

Note that the rows $\bh_{1}$ and $\bh_{2}$ in $\bH$ matrix for eBCH code (128, 106) satisfy the relationship $\supp(\bh_{2})\subset\supp(\bh_{1})$ where $\bh_2=\bone$ and $|\bh_1|/2=|\bh_2|=64$. Hence, for two constraints, we modify $\bh_1$ by $\bh_1=\bh_1\oplus\bh_2$. Similarly, the rows $\bh_{1}$, $\bh_{4}$, and $\bh_{5}$ in $\bH$ matrix for PAC code (64, 44) satisfy the relationship $\supp(\bh_{5})\subset\supp(\bh_{4})\subset\supp(\bh_{1})$ where $\bh_1=\bone$ and $|\bh_1|/2=|\bh_4|=2|\bh_5|=32$.

%For a relatively small maximum queries, order of generated error sequences matter. Because the correct sequence may not be within the first $b$ sequences. However if by putting them in order, in particular at high SNR regimes where $|\be|$ is small, we can improve the performance.  
%binary primitive BCH and extended BCH codes have a larger minimum distances comparing with PAC codes. %That is the reason for superior performance. However, there are limited choices for rates with BCH codes. For instance, eBCH(128,106,8), $A_{dmin}=774192$ and eBCH(64,45,8) $A_{dmin}=27288$
% \begin{figure}[h]
%     \centering
%     \includegraphics[width=1\columnwidth]{CPLX_Fano_bw}
%     \caption{Average Time Complexity of Fano Decoding}
%     \label{fig:cplx_fano}
%     \vspace{-10pt}
% \end{figure}

\section{CONCLUSION} %AND FUTURE DIRECTIONS}
In this paper, %by retaining the simplicity of the ORBGRAND, 
we propose an approach to constrain the error sequence generator by a single or multiple constraints extracted from the parity check matrix (with or without matrix manipulation) of the underlying codes. We further present how to progressively evaluate the constraints for the error sequences during the generation process. The theoretical analysis shows that the search space reduces by factor of 2 after introducing every constraint. The numerical results support this theory when the maximum number of queries is relatively large. To have a negligible computational overhead, it is suggested to use only one constraint preferably with codes which have an overall parity check bit such as polar codes and their variants and extended codes by one bit. This single constraint suffices to halve the complexity. The proposed method can be applied on other GRAND variants as well.%A fast hardware architecture for the multi-constraint error sequence generator can give an edge to ORBGRAND over the other near-ML decoders.

\begin{figure}%[ht]
    \centering
    \includegraphics[width=0.65\columnwidth]{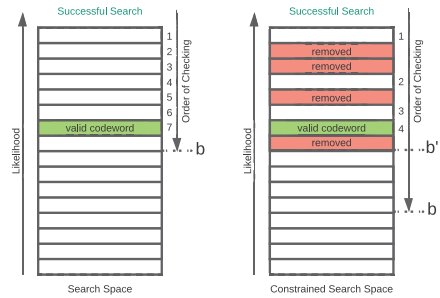}
    \caption{Reducing complexity : An example showing a stack of candidate codewords sorted with respect to likelihood (the codeword at the top has the highest likelihood). By removing the invalid codewords, we can reach to the first valid codeword faster before reaching the abandonment threshold $b=8$.}
    %\caption{Reducing complexity vs improving performance: Two examples showing a stack of candidate codewords sorted with respect to likelihood (the codeword at the top has the highest likelihood). By removing the invalid codewords, we can reach to the first valid codeword faster before reaching the abandonment threshold $b=8$ (as shown in the two left figures), or we can decode successfully within $b$ queries (as shown in the two right figures).}
    \label{fig:success_query}
    \vspace{-10pt}
\end{figure}

\addtolength{\textheight}{-12cm}   % This command serves to balance the column lengths
                                  % on the last page of the document manually. It shortens
                                  % the textheight of the last page by a suitable amount.
                                  % This command does not take effect until the next page
                                  % so it should come on the page before the last. Make
                                  % sure that you do not shorten the textheight too much.

%%%%%%%%%%%%%%%%%%%%%%%%%%%%%%%%%%%%%%%%%%%%%%%%%%%%%%%%%%%%%%%%%%%%%%%%%%%%%%%%

%%%%%%%%%%%%%%%%%%%%%%%%%%%%%%%%%%%%%%%%%%%%%%%%%%%%%%%%%%%%%%%%%%%%%%%%%%%%%%%%

%%%%%%%%%%%%%%%%%%%%%%%%%%%%%%%%%%%%%%%%%%%%%%%%%%%%%%%%%%%%%%%%%%%%%%%%%%%%%%%%
%\Section*{APPENDIX}

%Appendixes should appear before the acknowledgment.

%\Section*{ACKNOWLEDGMENT}
%The authors would like to thank Dr. Lilian Khaw, Fariba Abbasi and Viduranga Wijekoon for their comments.

\newpage

\end{document}